\titleformat{\section}[hang]{\Large\bfseries\filright}{\thesection.}{.5em}{}
\titleformat{\subsection}[hang]{\large\bfseries\filright}{}{0em}{}
\titleformat{\subsubsection}[block]{\bfseries}{}{0em}{}
\theoremstyle{definition}
\newtheorem{question*}{Question}
\newcommand{\Tr}{\operatorname{Tr}}
\newcommand{\I}{\mathds{1}}
\newcommand{\complex}{\mathbb{C}}
\newcommand{\M}{M}
\newcommand{\U}[1]{\op{U}(#1)}
\newcommand{\op}{\operatorname}
\renewcommand{\d}{\mathrm{d}}
\newcommand{\ip}[2]{\langle #1 , #2\rangle}
\newcommand{\Bigip}[2]{\Bigl\langle #1, #2 \Bigr\rangle}
\newcommand{\abs}[1]{\lvert #1 \rvert}
\newcommand{\norm}[1]{\lVert #1 \rVert}
\theoremstyle{plain}
\newtheorem{theorem}{Theorem}
\newtheorem{corollary}[theorem]{Corollary}
\newtheorem{proposition}[theorem]{Proposition}
\theoremstyle{definition}
\newtheorem{example}[theorem]{Example}
\theoremstyle{remark}
\begin{document}

\title{Twirling channels have minimal mixed-unitary rank}
\author[ ]{Mark Girard\textsuperscript{1,2} and Jeremy Levick\textsuperscript{3}}

\affil[1]{Institute for Quantum Computing, University of Waterloo\vspace{0.4mm}}

\affil[2]{School of Computer Science, University of Waterloo\vspace{0.4mm}}

\affil[3]{Department of Mathematics \& Statistics, University of
  Guelph\vspace{0.4mm}}

\date{\today}
\maketitle

\begin{abstract}
 For a positive integer $d$ and a unitary representation $\rho:G\rightarrow\U{d}$ of a compact group~$G$, the \emph{twirling channel} for this representation is the linear mapping $\Phi:\M_d\rightarrow\M_d$ defined as $\Phi(X)=\int_{G}\d \mu(g)\, \rho(g)X\rho(g^{-1})$ for every $X\in\M_d$, where $\mu$ is the Haar measure on $G$. Such channels are examples of \emph{mixed-unitary} channels, as they are in the convex hull of the set of unitary channels of a fixed size. By Carath\'eodory's theorem, these channels can always be expressed as a finite linear combination of unitary channels. We consider the \emph{mixed-unitary rank} twirling channels---which is the minimum number of distinct unitary conjugations required to express the channel as a convex combination of unitary channels---and show that the mixed-unitary rank of every twirling channel is always equal to its Choi rank, both of which are equal to the dimension of the von Neumann algebra generated by the representation. Moreover, we show how to explicitly construct minimal mixed-unitary decompositions for these types of channels and provide some examples.
 
\end{abstract}

\section{Introduction}

A \emph{quantum channel} is a linear mapping of matrices $\Phi:\M_n\rightarrow\M_m$ that is both completely positive and trace preserving, where $\M_n$ denotes the space of~$n\times n$ matrices with complex entries for a positive integer~$n$. It is well known that a mapping $\Phi:\M_n\rightarrow\M_m$ is completely positive if and only if it admits a \emph{Kraus representation} of the form  
\begin{equation}
  \label{eq:Kraus form}
  \Phi(X) = \sum_{k=1}^r A_k X A_k^{\ast}
\end{equation}
for every $X\in\M_n$, for some choice of positive integer $r$ and matrices $A_1,\dots,A_r\in\M_{m,n}$, where $\M_{m,n}$ denotes the space of $m\times n$ matrices \cite{Choi1975}. 
A map $\Phi$ described in this way preserves trace if and only if $\sum_{k=1}^r A_k^{\ast} A_k = \I_n$, where $\I_n$ is the $n\times n$ identity matrix.
The minimum value of $r$ for which a description of the form in~\eqref{eq:Kraus form} exists is called the
\emph{Choi rank} of~$\Phi$; this number being so-named because it is equal to
the rank of the \emph{Choi matrix} associated with $\Phi$, which is the matrix $J(\Phi)\in\M_m\otimes\M_n$ defined as
\begin{equation}
  J(\Phi) = \sum_{j,k=1}^n \Phi(E_{j,k}) \otimes E_{j,k},
\end{equation}
where $E_{j,k}\in\M_n$ denotes the matrix having a 1 in the $(j,k)$-entry 
and 0 in all other entries.

One of the simplest types of quantum channels are \emph{unitary channels}. These are channels of the form $\Phi:\M_n\rightarrow\M_n$ that are given by $\Phi(X)= UXU^*$ for every $X\in\M_n$, for some fixed choice of a unitary matrix $U\in\U{n}$, where $\U{n}$ denotes the group of~$n\times n$ unitary matrices.  A channel is a \emph{mixed-unitary channel} if it can be expressed as a convex combination of unitary channels. That is, a channel $\Phi:\M_n\rightarrow\M_n$ is mixed unitary if and only if
there exists a positive integer $N$, a probability vector $(p_1,\ldots,p_N)$,
and unitary matrices $U_1,\ldots,U_N\in\U{n}$ such that
\begin{equation}
  \label{eq:general mixed-unitary channel}
  \Phi(X) = \sum_{k = 1}^N p_k U_k X U_k^{\ast}
\end{equation}
for every $X\in\M_n$. The set of mixed-unitary channels of a fixed size is the convex hull of a compact set (the set of unitary channels) in a finite-dimensional space, and thus---by Carath\'eodory's theorem---every element in the closed convex hull of the set of unitary channels can be represented as a (finite) convex combination of unitary channels. For a mixed-unitary channel $\Phi:\M_n\rightarrow\M_n$, the \emph{mixed-unitary rank} of~$\Phi$ is the smallest number $N$ for which $\Phi$ has an expression of the form in~\eqref{eq:general mixed-unitary channel}. If $r$ is the Choi rank and $N$ is the mixed-unitary rank of a channel $\Phi$, it is obvious that $r\leq N$, but beyond that it is not straightforward to determine bounds on the mixed-unitary rank of a mixed-unitary channel. We shall say that a mixed-unitary channel has \emph{minimal} mixed-unitary rank if it is the case that $N=r$ (which is the smallest that $N$ can be). Recent work by the authors \cite{Girard2020} investigates properties of the mixed-unitary rank of general mixed-unitary channels, where upper bounds on the mixed-unitary rank of a channel are presented in terms of its Choi rank and the dimension of a corresponding operator system. Moreover, the work in \cite{Girard2020} presents the first known examples of mixed-unitary channels whose mixed-unitary ranks are not minimal. Previous bounds on the mixed-unitary rank of channels were presented in \cite{Buscemi2006}, while further properties of mixed-unitary channels have been studied in \cite{Audenaert2008} and \cite{Mendl2009}.

In this paper we are concerned with determining the mixed-unitary rank of certain types of mixed-unitary channels known as \emph{twirling channels}, which we describe as follows. Let $G$ be a compact group, let $\rho:G\rightarrow \U{n}$ be a unitary representation of $G$ for some positive integer~$n$, and let $\mu$ be the Haar measure on $G$. The twirling channel associated with this representation is the linear mapping $\Phi_\rho:\M_n\rightarrow\M_n$ defined as
\begin{equation}
 \Phi_\rho(X) = \int_G \d\mu(g)\, \rho(g)X\rho(g^{-1})
\end{equation}
for every $X\in\M_n$. It is evident that every twirling channel is mixed unitary and one may therefore consider the mixed-unitary rank of such a channel. One well-known example of a twirling channel used in the literature is the \emph{Werner twirling channel} $\Xi:\M_{n^2}\rightarrow\M_{n^2}$ defined as 
\begin{equation}\label{eq:wernertwirl}
 \Xi(X) = \int_{\U{n}} \d\mu(U)\,(U\otimes U)X(U\otimes U)^* 
\end{equation}
for every $X\in\M_{n^2}$, where $\mu$ is the Haar measure of the unitary group $\U{n}$. In the case when the group $G$ is finite, the corresponding twirling channel takes the form
\begin{equation}
 \Phi_\rho(X) = \frac{1}{\abs{G}}\sum_{g\in G}\rho(g)X\rho(g^{-1}).
\end{equation}
Twirling channels have a long history in the quantum information literature and have numerous applications. For example, channels of this form have been used in the contexts of quantum error correction \cite{Bennett1996}, quantum data hiding \cite{DiVincenzo2002}, as well as in the study of quantum entanglement \cite{Bennett1996a,Vollbrecht2001} and quantum coherence \cite{Chen2016a}.

The main result of this paper is summarized in the following statement: every twirling channels has minimal mixed-unitary rank. That is, the mixed-unitary rank of any twirling channel is equal to the Choi rank of that channel. This result is obtained through some simple observations regarding unitary representations and von Neumann algebras.

Before proceeding, we remark on the relation of our work to the concept of unitary designs \cite{Dankert2009,Roy2009}. Let $n$ and $t$ be positive integers and consider the linear mapping $\Xi:\M_{n^t}\rightarrow \M_{n^t}$ defined as
\begin{equation}
 \Xi(X) = \int_{\U{n}}\d\mu(U)\, \bigl(U^{\otimes t}\bigr)X\bigl(U^{\otimes t}\bigr)^*
\end{equation}
for every $X\in\M_{n^t}$. (Note that the channel in \eqref{eq:wernertwirl} is a channel of this form for the case $t=2$.) A collection of unitary matrices $U_1,\dots,U_N\in\U{n}$ for some positive integer $N$ comprises a \emph{unitary $t$-design} of dimension $n$ if it holds that 
\begin{equation}\label{eq:tdesign}
 \Xi(X) = \frac{1}{N}\sum_{k=1}^N \bigl(U_k^{\otimes t}\bigr)X\bigl(U_k^{\otimes t}\bigr)^*
\end{equation}
for every matrix $X\in\M_{n^t}$. The number $N$ is said to be the \emph{size} of this unitary $t$-design. The channel in \eqref{eq:wernertwirl} is clearly mixed unitary and an expression of the form in  \eqref{eq:tdesign} comprises a mixed-unitary decomposition of the channel $\Xi$ as each of the matrices $U_k^{\otimes t}\in\U{n^t}$ are unitary. The size of any unitary $t$-design must therefore be bounded below by the mixed-unitary rank of the channel in \eqref{eq:wernertwirl}. The mixed-unitary rank of $\Xi$ is, in general, distinctly different from the size of the smallest unitary $t$-design. Indeed, for the case when $t=2$, it is known that the size of any unitary 2-design must be at least $n^4-2n^2+2$ \cite{Gross2007}. Meanwhile, the mixed-unitary rank of the channel in \eqref{eq:wernertwirl} will be shown in this paper to be equal to $(n^4+n^2)/2$.

\section{Preliminaries}
In this section we summarize some of the notation as well as some known facts and results concerning quantum channels and representation theory of compact groups that will be used later in the paper. Further
information on quantum channels, and the role they play in the theory of quantum information,
can be found in texts on the subject, including \cite{Watrous2018}.

Given any matrix $A\in\M_{m,n}$, we denote by $A^{\ast}$ the adjoint (or conjugate
transpose) of $A$. The Hilbert-Schmidt inner product on $\M_{m,n}$ is defined as
$\ip{A}{B}  = \Tr(A^*B)$
for every choice of matrices $A, B \in \M_{m,n}$. A matrix $A\in\M_{m,n}$ is said to be an \emph{isometry} if $A^*A = \I_n$, where $\I_n$ is the $n\times n$ identity matrix.

Some simple examples of channels include the \emph{identity channel} and the \emph{completely depolarizing channel}, which we describe below. For a positive integer~$n$, the \emph{identity channel} of dimension~$n$ is the mapping $\I_{\M_n}:\M_n\rightarrow\M_n$ defined as $\I_{\M_n}(X)=X$ for every $X\in\M_n$ while the \emph{completely depolarizing channel} of dimension~$n$ is the linear mapping $\Omega_n:\M_n\rightarrow\M_n$ defined as
\begin{equation}
 \Omega_n(X) = \frac{1}{n}\Tr(X)\I_n
\end{equation}
 for every $X\in\M_n$. One choice of Kraus representation for the completely depolarizing channel is
\begin{equation}\label{eq:depolarizingkraus}
 \Omega_n(X) = \frac{1}{n}\sum_{j,k=1}^n E_{j,k} X E_{j,k}^*.
\end{equation}
The completely depolarizing has Choi representation $J(\Omega_n)=\tfrac{1}{n}\I_n\otimes\I_n$ and thus has Choi rank equal to~$n^2$. The mixed-unitary rank of the completely depolarizing channel is also equal to $n^2$, as one may express this channel as
\begin{equation}
 \Omega_n(X) = \frac{1}{n^2}\sum_{k=1}^{n^2}U_kXU_k^*
\end{equation}
for any orthogonal collection of unitary matrices $\{U_1,\dots,U_{n^2}\}\subseteq\U{n}$. (Such a collection exists, as one may choose, for example, the discrete Weyl operators.)

\subsection{Complementary channels}\label{sec:prelim}

For positive integers~$n$, $m$, and $r$, two channels $\Phi:\M_n\rightarrow\M_m$ and $\Psi:\M_n\rightarrow\M_r$ are said to be \emph{complementary} if there exists an isometry $A\in\M_{mr,n}$ such that
\begin{equation}
 \Phi(X) = (\I_{\M_m}\otimes \mathrm{Tr})(AXA^*) 
 \qquad\text{and}\qquad
 \Psi(X) = (\mathrm{Tr}\otimes\I_{\M_r})(AXA^*) 
\end{equation}
for every $X\in\M_n$, where one views the $mr\times mr$ matrix $AXA^*$ as an element of the tensor product space $\M_m\otimes M_r\simeq \M_{mr}$. Complementarity of channels can also be determined in terms of Kraus representations. For any choice of matrices $A_1,\dots,A_r\in\M_{m,n}$ satisfying $\sum_{k=1}^r A_k^*A_k=\I_n$, one may define a matrix $A\in\M_{mr,n}$ as
\begin{equation}
 A = \sum_{k=1}^r A_k\otimes e_k,
\end{equation}
where $\{e_1,\dots,e_n\}$ is the standard basis of $\complex^n$. This matrix is an isometry, as it may be verified that $A^*A = \sum_{k=1}^r A_k^*A_k$. For this choice of isometry, it holds that
\begin{equation}
 AXA^* = \sum_{j,k=1}^r A_jXA_k^* \otimes E_{j,k}
\end{equation}
for every $X\in\M_n$, and it follows that the channels $\Phi$ and $\Psi$ defined as
\begin{equation}
 \Phi(X)  =\sum_{k = 1}^r A_k X A_k^{\ast}
 \qquad\text{and}\qquad
 \Psi(X) = \sum_{j,k=1}^r \ip{A_j^{\ast} A_k}{X} \, E_{j,k}
\end{equation}
are complementary. If it is the case that $r=\op{rank}(J(\Phi))$ and $\Psi:M_{n}\rightarrow M_r$ is a channel that is complementary to $\Phi$, then any other given channel $\Xi:M_n\rightarrow M_N$ is also complementary to $\Phi$ if and only if there exists an isometry $V\in M_{N,r}$ such that $\Xi(X) = V\Psi(X)V^*$ for every $X \in M_n$. 

It will be useful to examine forms of channels that are complementary to the identity channel~$\I_{\M_n}$ and the completely depolarizing channel $\Omega_n$ for a fixed positive integer~$n$. It is evident that the trace mapping $\Tr:\M_n\rightarrow\complex$ is complementary to the identity channel. One choice of complementary channel to the completely depolarizing map is provided by the following proposition. 
\begin{proposition}
 Let $n$ be a positive integer and define a linear mapping $\Psi_n:\M_n\rightarrow\M_{n^2}$  as
 \begin{equation}
 \Psi_n(X) = \frac{1}{n}\I_n\otimes X
\end{equation}
for every $X\in\M_n$. Then $\Psi_n$ is a channel that is complementary to the completely depolarizing channel $\Omega_n$. 
\end{proposition}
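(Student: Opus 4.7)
The plan is to use the explicit complementary-channel recipe recorded in Section~\ref{sec:prelim} applied to the Kraus representation of $\Omega_n$ given in~\eqref{eq:depolarizingkraus}. Concretely, I would take as Kraus operators the rescaled matrix units $A_{(j,k)} = \tfrac{1}{\sqrt{n}}\, E_{j,k}$ for $j,k\in\{1,\dots,n\}$, which indeed satisfy $\sum_{j,k} A_{(j,k)}^{\ast} A_{(j,k)} = \I_n$. Since there are $n^2$ such Kraus operators, the canonical complementary channel built from them will land in $\M_{n^2}$, which matches the codomain of $\Psi_n$.

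Next I would compute $A_{(j,k)}^{\ast} A_{(j',k')} = \tfrac{1}{n}\, E_{k,j} E_{j',k'} = \tfrac{\delta_{j,j'}}{n}\, E_{k,k'}$, so that
\begin{equation}
\bigip{A_{(j,k)}^{\ast} A_{(j',k')}}{X} = \frac{\delta_{j,j'}}{n}\, \ip{E_{k,k'}}{X} = \frac{\delta_{j,j'}}{n}\, X_{k,k'}.
\end{equation}
Plugging this into the formula $\Psi(X) = \sum \ip{A_{\alpha}^{\ast} A_{\beta}}{X}\, E_{\alpha,\beta}$, and identifying the pair indices with the tensor structure via $E_{(j,k),(j',k')} = E_{j,j'}\otimes E_{k,k'}$, the sum collapses to
\begin{equation}
\Psi(X) = \frac{1}{n}\sum_{j=1}^n E_{j,j} \otimes \sum_{k,k'=1}^n X_{k,k'} E_{k,k'} = \frac{1}{n}\, \I_n \otimes X = \Psi_n(X).
\end{equation}
Because this $\Psi$ is by construction complementary to $\Omega_n$, the same is true of $\Psi_n$, which in particular shows that $\Psi_n$ is indeed a channel.

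There is essentially no obstacle here beyond bookkeeping: the only thing that could trip one up is fixing conventions for the identification $\M_n\otimes\M_n\simeq\M_{n^2}$ and the pairing of the Kraus indices with the standard basis of $\complex^n\otimes\complex^n$, but once that convention is chosen the computation above is direct. As a sanity check, one verifies that $r=n^2$ equals $\op{rank}(J(\Omega_n))$, so the complementary channel constructed from this Kraus representation is of minimal output dimension, and the statement uniquely pins down $\Psi_n$ up to isometric equivalence as mentioned in the preliminaries.
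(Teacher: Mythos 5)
Your proof is correct and follows essentially the same route as the paper: both start from the matrix-unit Kraus representation $\Omega_n(X)=\tfrac{1}{n}\sum_{j,k}E_{j,k}XE_{j,k}^*$, the paper verifying complementarity by forming the isometry $A=\tfrac{1}{\sqrt{n}}\sum_{j,k}E_{j,k}\otimes e_j\otimes e_k$ and computing both partial traces of $AXA^*$ directly, while you invoke the equivalent Gram-matrix formula $\Psi(X)=\sum_{j,k}\ip{A_j^{\ast}A_k}{X}\,E_{j,k}$ already recorded in the preliminaries. The index bookkeeping and the identification $E_{(j,k),(j',k')}=E_{j,j'}\otimes E_{k,k'}$ are consistent with the paper's conventions, so the two arguments are the same computation in different packaging.
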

\begin{proof}Define a matrix $A\in\M_{n^3,n}$ as
\begin{equation}
 A = \frac{1}{\sqrt{n}}\sum_{j,k=1}^n E_{j,k}\otimes e_j\otimes e_k.
\end{equation}
This is an isometry, as $A^*A = (\sum_{j,k=1}^n E_{k,j}E_{j,k})/n = \sum_{k=1}^nE_{k,k} = \I_n$. For this choice of isometry, one has that
\begin{equation}
 (\I_{M_n}\otimes\Tr)(AXA^*) = \frac{1}{n}\sum_{j,k=1}^n E_{j,k}XE_{k,j} = \Omega_n(X)
\end{equation}
and that
\begin{align*}
 (\Tr\otimes\I_{\M_{n^2}})(AXA^*) = \frac{1}{n}\sum_{i,j,k,l=1}^n \Tr(E_{j,k}XE_{l,i}) E_{j,i}\otimes E_{k,l} 
   &= \frac{1}{n}\sum_{j,k,l=1}^n \ip{E_{k,l}}{X}E_{j,j}\otimes E_{k,l}  \\&= \frac{1}{n}\I_n\otimes X = \Psi_n(X)
\end{align*}
for every $X\in\M_n$. It follows that the channels $\Omega_n$ and $\Psi_n$ are complimentary, as desired.
\end{proof}

We also note that complementarity of channels is well behaved under tensor products and direct sums. That is, if $\Psi_0:\M_{n_0}\rightarrow\M_{r_0}$ and $\Psi_1:\M_{n_1}\rightarrow\M_{r_1}$ are channels that are complementary to some channels $\Phi_0:\M_{n_0}\rightarrow\M_{m_0}$ and $\Phi_1:\M_{n_1}\rightarrow\M_{m_1}$ respectively, then the channels $\Psi_0\otimes\Psi_1$ and $\Psi_0\oplus\Psi_1$ are complementary to the channels $\Phi_0\otimes\Phi_1$ and $\Phi_0\oplus\Phi_1$ respectively. To see this, note that there must exist a choice of isometries $A_0\in\M_{m_0r_0,n_0}$ and $A_1\in\M_{m_1r_1,n_1}$ such that
\begin{equation}
 \Phi_0(X) = (\I_{\M_{m_0}}\otimes \mathrm{Tr})(A_0XA_0^*)\qquad\text{and}\qquad
 \Psi_0(X) = (\mathrm{Tr}\otimes\I_{\M_{r_0}})(A_0XA_0^*)
\end{equation}
for every $X\in\M_{n_0}$ and 
\begin{equation}
\Phi_1(Y) = (\I_{\M_{m_1}}\otimes \mathrm{Tr})(A_1YA_1^*)\qquad\text{and}\qquad
 \Psi_1(Y) = (\mathrm{Tr}\otimes\I_{\M_{r_1}})(A_1YA_1^*)
\end{equation}
for every $Y\in\M_{n_1}$. The matrices $A_0\otimes A_1\in\M_{m_0r_0m_1r_1,n_0n_1}$ and $A_0\oplus A_1\in\M_{(m_0r_0+m_1r_1),(n_0+n_1)}$ are also isometries. Moreover, for every $X\in\M_{n_0}$ and $Y\in\M_{n_1}$, one has that
\begin{align*}
 (\Phi_0\otimes\Phi_1)(X\otimes Y) &= \bigl(\I_{\M_{m_0}}\otimes \mathrm{Tr}\otimes \I_{\M_{m_1}}\otimes \mathrm{Tr}\bigr)\bigl((A_0\otimes A_1)(X\otimes Y)(A_0\otimes A_1)^*\bigr)\\
 (\Psi_0\otimes\Psi_1)(X\otimes Y) &= \bigl(\mathrm{Tr}\otimes \I_{\M_{r_0}}\otimes \mathrm{Tr}\otimes \I_{\M_{r_1}}\bigr)\bigl((A_0\otimes A_1)(X\otimes Y)(A_0\otimes A_1)^*\bigr)\\
 (\Phi_0\oplus\Phi_1)(X\oplus Y) &= \bigl((\I_{\M_{m_0}}\otimes \mathrm{Tr})\oplus( \I_{\M_{m_1}}\otimes \mathrm{Tr})\bigr)\bigl((A_0\oplus A_1)(X\oplus Y)(A_0\oplus A_1)^*\bigr)\\
 \text{and}\quad(\Psi_0\oplus\Psi_1)(X\oplus Y) &= \bigl((\mathrm{Tr}\otimes\I_{M_{r_0}})\oplus(\mathrm{Tr}\otimes\I_{M_{r_1}})\bigr)\bigl((A_0\oplus A_1)(X\oplus Y)(A_0\oplus A_1)^*\bigr).
\end{align*}

The preceding observations allow us to determine a complementary channel for channels of the following form. Let $p$ be a positive integer, let $m_1,\dots,m_p$ and~$n_1,\dots,n_p$ be positive integers, define the number $d=m_1n_1+\cdots+m_pn_p$, and define a channel $\Phi:\M_d\rightarrow\M_d$ as
\begin{equation}\label{eq:directsumphi}
 \Phi = (\I_{M_{m_1}}\otimes\Omega_{n_1})\oplus\cdots\oplus(\I_{M_{m_p}}\otimes\Omega_{n_p}).
\end{equation}
If one defines the number $N=n_1^2+\cdots+n_p^2$ and a linear mapping $\Psi:\M_d\rightarrow\M_N$ as 
\begin{equation}\label{eq:directsumpsi}
 \Psi\bigl((X_1\otimes Y_1)\oplus\cdots\oplus(X_p\otimes Y_p)\bigr) = \left(\frac{\Tr(X_1)}{n_1} \I_{n_1}\otimes Y_1\right)\oplus\cdots\oplus\left(\frac{\Tr(X_p)}{n_p} \I_{n_p}\otimes Y_p\right)
\end{equation}
for every $X_1\in\M_{m_1},\dots,X_p\in\M_{m_p}$ and $Y_1\in\M_{n_1},\dots,Y_p\in\M_{n_p}$, then $\Psi$ defined in \eqref{eq:directsumpsi} is a channel that is complementary to the channel $\Phi$ defined in \eqref{eq:directsumphi}. 

\subsection{Mixed-unitary channels and mixed-unitary rank}

We now review a characterization of the mixed-unitary rank of channels that will be useful for determining the mixed-unitary rank of twirling channels. The reader is referred to \cite{Girard2020} for a proof and further details. Note that a square matrix $X\in\M_n$ is said to be \emph{traceless} if $\Tr(X) =0$ and is said to have \emph{vanishing diagonal} if all of its diagonal entries are equal to $0$.

\begin{theorem}\label{thm:mixedunitarycomplement}
 Let~$n$ be a positive integer, let $\Phi:\M_n\rightarrow\M_n$ be a channel having Choi rank $r$, and let $N\geq r$ be an integer. The following statements are equivalent.
 \begin{enumerate}
  \item The channel $\Phi$ is mixed unitary with mixed-unitary rank at most $N$.
  \item There is a channel $\Psi:\M_n\rightarrow\M_N$ complementary to $\Phi$ such that $\Psi(X)$ has vanishing diagonal for every traceless matrix $X\in\M_n$.
 \end{enumerate}
\end{theorem}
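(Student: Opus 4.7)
My plan is to prove both directions by translating between mixed-unitary decompositions of $\Phi$ and $N$-term Kraus representations of $\Phi$, using the explicit formula for the complementary channel associated to a Kraus representation recorded in Section~\ref{sec:prelim}.

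For the direction $(1)\Rightarrow(2)$, I would start from a mixed-unitary decomposition $\Phi(X)=\sum_{k=1}^{N}p_{k}U_{k}XU_{k}^{\ast}$, package the Kraus operators as $A_{k}=\sqrt{p_{k}}U_{k}$, and form the associated complementary channel $\Psi(X)=\sum_{j,k=1}^{N}\ip{A_{j}^{\ast}A_{k}}{X}E_{j,k}$. Its $(k,k)$-entry equals $p_{k}\ip{\I_{n}}{X}=p_{k}\Tr(X)$, which automatically vanishes on every traceless $X$, exhibiting the required $\Psi$.

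For the direction $(2)\Rightarrow(1)$, the plan is to reverse-engineer a mixed-unitary decomposition from a given $\Psi:\M_{n}\rightarrow\M_{N}$. First I would invoke the characterization from Section~\ref{sec:prelim}---that every complementary channel of output dimension $N\geq r$ has the form $\Psi(X)=V\Psi_{0}(X)V^{\ast}$ for a canonical minimal complementary channel $\Psi_{0}$ and an isometry $V\in\M_{N,r}$---to extract an $N$-term Kraus representation $A_{1},\ldots,A_{N}$ of $\Phi$ (possibly with some zero operators) for which $\Psi(X)=\sum_{j,k=1}^{N}\ip{A_{j}^{\ast}A_{k}}{X}E_{j,k}$. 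The $(k,k)$-entry is then $\Tr(A_{k}^{\ast}A_{k}X)$, and the vanishing-diagonal hypothesis says this functional kills every traceless matrix. Since the orthogonal complement of the traceless subspace with respect to the Hilbert-Schmidt pairing is exactly $\complex\I_{n}$, this forces $A_{k}^{\ast}A_{k}=p_{k}\I_{n}$ for some $p_{k}\geq 0$. Each nonzero $A_{k}$ is therefore proportional to a unitary $U_{k}$; trace preservation $\sum_{k}A_{k}^{\ast}A_{k}=\I_{n}$ yields $\sum_{k}p_{k}=1$; and the Kraus identity delivers $\Phi(X)=\sum_{k=1}^{N}p_{k}U_{k}XU_{k}^{\ast}$.

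The step I expect to be the main obstacle is the bookkeeping in $(2)\Rightarrow(1)$: verifying that an arbitrary complementary channel of output dimension $N$, not merely the minimal one of dimension $r$, can genuinely be realized via an $N$-term Kraus representation in such a way that the standard-basis diagonal of $\Psi(X)$ equals the list of numbers $\ip{A_{k}^{\ast}A_{k}}{X}$. Once this identification is in place, the heart of the argument reduces to the elementary fact that a matrix $M\in\M_{n}$ satisfies $\ip{M}{X}=0$ for every traceless $X$ if and only if $M$ is a scalar multiple of $\I_{n}$.
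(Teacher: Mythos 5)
Your proof is correct. The paper does not actually prove Theorem~\ref{thm:mixedunitarycomplement} itself---it defers to \cite{Girard2020}---and your argument is the standard one underlying that result: the passage between $N$-term Kraus representations and complementary channels via $\Psi(X)=\sum_{j,k}\ip{A_j^{\ast}A_k}{X}E_{j,k}$, together with the observation that the orthogonal complement of the traceless matrices is $\complex\I_n$, so a vanishing diagonal forces each $A_k^{\ast}A_k$ to be a nonnegative multiple of $\I_n$. The bookkeeping step you flag does go through: writing $\Psi(X)=V\Psi_0(X)V^{\ast}$ with $V\in\M_{N,r}$ an isometry and $\Psi_0$ built from a minimal Kraus representation $B_1,\dots,B_r$, the operators $A_a=\sum_j V_{aj}B_j$ give an $N$-term Kraus representation of $\Phi$ (since $V^{\ast}V=\I_r$) whose associated complementary channel is exactly $\Psi$, so the diagonal entries of $\Psi(X)$ are indeed $\Tr(A_k^{\ast}A_kX)$; in the forward direction one only needs to pad with zero operators if the mixed-unitary decomposition has fewer than $N$ terms.
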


We note that tensor products and direct sums of mixed-unitary channels are again mixed unitary. That is, provided that each of the channels $\Phi_1:\M_{n_1}\rightarrow\M_{n_1},\dots,\Phi_p:\M_{n_p}\rightarrow\M_{n_p}$ are mixed unitary for some positive integers~$n_1,\dots,n_p$, the channels
\[
 \Phi_1\otimes\cdots\otimes\Phi_p:\M_{n_1\cdots n_p}\rightarrow \M_{n_1\cdots n_p}
\]
and
\[
 \Phi_1\oplus\cdots\oplus\Phi_p:\M_{n_1+\cdots + n_p}\rightarrow \M_{n_1+\cdots+ n_p}
\]
are also mixed-unitary. (See \cite{Girard2020} for details.) As the identity channel $\I_{\M_n}$ and completely depolarizing channel $\Omega_n$ are mixed unitary for every positive integer~$n$, it follows that the channels of the form in~\eqref{eq:directsumphi} are mixed unitary. Although the Choi rank of channels of this form can be easily computed, determining the mixed-unitary rank of these channels is not so straightforward.

\subsection{Representation theory}\label{sec:representation}

In this section we outline some of the important facts from representation theory that will be used in this paper.  We refer the reader to various texts on the subject for further details, such as \cite{Knapp2001}, \cite{Goodman2009}, and \cite{Sagan2001}.

One consequence of the Peter--Weyl Theorem for compact groups (see, e.g., Theorem 1.12 in \cite{Knapp2001}) is that every unitary representation of compact groups is decomposable into irreducible representations in the following manner. Let $G$ be a compact group, let $d$ be a positive integer, and let $\rho:G\rightarrow \U{d}$ be a unitary representation of $G$. There exists a choice of unitary matrix $U\in\U{d}$ along with positive integers $m_1,\dots,m_p$ and~$n_1,\dots,n_p$ satisfying $d=m_1n_1+\cdots+m_pn_p$ such that the representation decomposes as 
\begin{equation}
 U\rho(g)U^* = \bigl(\I_{m_1}\otimes \rho_1(g)\bigr)\oplus\cdots\oplus\bigl(\I_{m_p}\otimes \rho_p(g)\bigr)
\end{equation}
for some inequivalent irreducible representations $\rho_1:G\rightarrow\U{n_1},\dots,\rho_p:G\rightarrow\U{n_p}$.  The \emph{commutant algebra} of this representation---the set of all matrices in $\M_d$ that commute with $\rho(g)$ for every group element $g\in G$---is a subalgebra of~$\M_d$ and may be expressed as
\begin{equation}\label{eq:commutantalgebra}
 \op{comm}(\{\rho(g)\,:\, g\in G\})  = U^*\bigl((\M_{m_1}\otimes\I_{n_1})\oplus \cdots\oplus (\M_{m_p}\otimes\I_{n_p})\bigr)U,
\end{equation}
where $U\in\U{d}$ is the same unitary matrix as above. (See, e.g., Section 1.7 of \cite{Sagan2001}.) Let $\Phi:\M_d\rightarrow\M_d$ be the corresponding twirling channel for this representation, which is defined as
\begin{equation}
 \Phi(X) = \int_{G}\mathrm{d}\mu(g)\, \rho(g)X\rho(g)^*
\end{equation}
for every $X\in\M_d$, where $\mu$ is the Haar measure on $G$. As the twirling channel must be the projection of~$\M_d$ onto the commutant algebra of the representation given in~\eqref{eq:commutantalgebra}, the channel $\Phi$ may be alternatively expressed as
\begin{equation}
 \Phi(X) = U^*\Bigl(\bigl((\I_{\M_{m_1}}\otimes\Omega_{n_1})\oplus\cdots\oplus(\I_{\M_{m_p}}\otimes\Omega_{n_p})\bigr)(UXU^*)\Bigr)U
\end{equation}
for every $X\in\M_d$, where $U\in\U{d}$ is the same unitary matrix as chosen above. 

\subsection{Trace vectors and von Neumann algebras}
 Let $d$ be a positive integer and let $\mathcal{A}\subset\M_d$ be a subset. We shall say that a vector $v\in\complex^d$ is a \emph{trace vector} for $\mathcal{A}$ if it holds that 
 \begin{equation}
  \langle v,Av\rangle = \Tr(A)
 \end{equation}
 for every $A\in\mathcal{A}$. Properties of trace vectors were first investigated in \cite{Pereira2003} (although the term `trace vector' was defined slightly differently there). If $\I_d\in\mathcal{A}$ and $v\in\complex^d$ is a trace vector for $\mathcal{A}$, it must be the case that
 \begin{equation}
  \lVert v\rVert = \sqrt{\ip{v}{v}} = \sqrt{\ip{v}{\I_d v}}= \sqrt{\Tr(\I_d)} = \sqrt{d}.
 \end{equation}
 A subalgebra $\mathcal{A}\subset\M_d$ is a \emph{von Neumann algebra} if it is both unital and self-adjoint (that is, if it satisfies both $\I_d\in\mathcal{A}$ and $\mathcal{A}^*=\mathcal{A}$). A useful characterization of von Neumann algebras in $\M_d$ was provided in \cite{Pereira2003}, which we summarize below.

\begin{theorem}[Pereira]\label{thm:rajesh}
  Let $d$ be a positive integer and let $\mathcal{A}\subset\M_d$ be a von Neumann algebra. The following statements are equivalent.
  \begin{enumerate}

 \item There exists a vector $v\in\complex^d$ that is a trace vector for $\mathcal{A}$.
 \item There exists an orthogonal basis of~$\complex^d$ such that each element of the basis is a trace vector for $\mathcal{A}$. 
   \item There exists a positive integer $p$, positive integers integers $m_1,\dots,m_p$ and~$n_1,\dots,n_p$ that satisfy $m_k\geq n_k$ for each $k\in\{1,\dots,p\}$, and a unitary matrix $U\in\U{d}$ such that
   \[
    U\mathcal{A} U^* = (\I_{m_1}\otimes\M_{n_1})\oplus\cdots\oplus(\I_{m_p}\otimes\M_{n_p}).
   \]
  \end{enumerate}

\end{theorem}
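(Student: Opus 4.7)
The plan is to establish the cyclic chain $(3)\Rightarrow(2)\Rightarrow(1)\Rightarrow(3)$; the implication $(2)\Rightarrow(1)$ is immediate.

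For $(3)\Rightarrow(2)$: Conjugating $\mathcal{A}$ by a unitary preserves all three conditions, so I may assume outright that
\[
 \mathcal{A}=(\I_{m_1}\otimes\M_{n_1})\oplus\cdots\oplus(\I_{m_p}\otimes\M_{n_p}).
\]
Decompose $\complex^d=\bigoplus_k\complex^{m_k}\otimes\complex^{n_k}$ and reshape each block component $v_k$ of a vector $v=\bigoplus_k v_k$ as an $m_k\times n_k$ matrix~$V_k$. Applying the defining identity to algebra elements of the form $A=\bigoplus_k\I_{m_k}\otimes B_k$ shows that $v$ is a trace vector for~$\mathcal{A}$ precisely when $V_k^\ast V_k=m_k\I_{n_k}$ in each block; equivalently, $V_k/\sqrt{m_k}$ is an isometry from~$\complex^{n_k}$ to~$\complex^{m_k}$, and such isometries exist exactly when $m_k\geq n_k$.

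Upgrading this to an orthogonal basis requires producing $d$ such vectors whose block components assemble into mutually orthogonal elements of $\complex^d$. My approach is first to construct in each block an explicit Hilbert--Schmidt orthogonal family of $m_kn_k$ scaled isometries $\{M_i^{(k)}\}_{i=1}^{m_kn_k}\subset\M_{m_k,n_k}$---for example, by fixing an orthonormal basis $\{f_1,\dots,f_{m_k}\}$ of $\complex^{m_k}$ and forming, for each $(a,b)\in\mathbb{Z}_{m_k}\times\mathbb{Z}_{n_k}$, the matrix whose $j$-th column is $\sqrt{m_k}\,\omega_{n_k}^{jb}f_{(a+j)\bmod m_k}$ for $j=0,\dots,n_k-1$, whose orthogonality and scaled-isometry properties both follow from direct computation. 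Then, since any trace vector for $\mathcal{A}$ must have a scaled-isometry contribution in \emph{every} block, I will combine these block-wise families by acting on a fixed trace vector by unitary elements of the commutant $\mathcal{A}'=\bigoplus_k\M_{m_k}\otimes\I_{n_k}$ (an action that preserves the trace-vector property, since $U'\in\mathcal{A}'$ commutes with every $A\in\mathcal{A}$), choosing the commutant unitaries so that the resulting $d$ trace vectors become pairwise orthogonal in $\complex^d$.

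For $(1)\Rightarrow(3)$: The structure theorem discussed in Section~\ref{sec:representation} provides a unitary~$U$ and multiplicities $m_k,n_k$ (a priori with no inequality between them) such that $U\mathcal{A}U^\ast=\bigoplus_k\I_{m_k}\otimes\M_{n_k}$. If $v$ is a trace vector for $\mathcal{A}$, then $Uv$ is a trace vector for $U\mathcal{A}U^\ast$; writing $Uv=\bigoplus v_k$ with reshaped matrices~$V_k$ and specializing the trace-vector identity to algebra elements supported in a single block forces $V_k^\ast V_k=m_k\I_{n_k}$, exactly as in $(3)\Rightarrow(2)$. Since the left-hand side has rank at most $\min(m_k,n_k)$ while the right-hand side has rank~$n_k$, the inequality $m_k\geq n_k$ is forced.

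The main obstacle I anticipate is the cross-block coordination in $(3)\Rightarrow(2)$: the scaled-isometry condition is nonlinear in each block, so one cannot simply take linear combinations of block-wise orthogonal scaled isometries to build an orthogonal basis of the full $\complex^d$, and some ingenuity is needed to arrange the per-block pieces---whether via the commutant action sketched above or a combinatorial scheme---into a genuine orthogonal basis consisting of trace vectors.
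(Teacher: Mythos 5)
Your reduction of the trace-vector condition to ``each reshaped block $V_k$ satisfies $V_k^*V_k=m_k\I_{n_k}$'' is correct, and your $(1)\Rightarrow(3)$ argument (structure theorem plus the rank obstruction forcing $m_k\geq n_k$) is complete. The problem is that the substance of $(3)\Rightarrow(2)$ --- producing $d$ \emph{pairwise orthogonal} trace vectors --- is exactly the step you defer, and neither of your two sketches closes it. Your per-block families of $m_kn_k$ Hilbert--Schmidt-orthogonal scaled isometries are fine as far as they go, but they cannot be assembled by direct sums: a trace vector must carry a scaled-isometry component in \emph{every} block, so two candidate vectors $\bigoplus_k M^{(k)}_{i_k}$ and $\bigoplus_k M^{(k)}_{i'_k}$ built from your families are orthogonal only when $i_k\neq i'_k$ for all $k$ simultaneously, which caps the number of mutually orthogonal vectors at $\min_k m_kn_k$ rather than $d=\sum_k m_kn_k$. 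The correct construction must instead allow the block components of distinct basis vectors to have nonzero Hilbert--Schmidt overlaps within each block, arranged so that these overlaps cancel when summed \emph{across} blocks. Your commutant-orbit idea is compatible with this (acting by unitaries in $\mathcal{A}'$ does preserve the trace-vector property), but you give no argument that suitable commutant unitaries exist, and that existence is the whole theorem.

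The missing ingredient is the phase bookkeeping that appears in Appendix~\ref{app:construction}: one partitions the index set $\{1,\dots,d\}$ into consecutive blocks of sizes $m_1n_1,\dots,m_pn_p$, assigns to the $j$-th basis vector a block-$k$ component built from shift-and-multiply isometries modulated by phases $\exp(2\pi i\, j\,t/d)$ with $t$ ranging over the $k$-th cell of the partition, and then observes that
\begin{equation*}
\bigip{v^{(j)}}{v^{(j')}}=\sum_{k=1}^{p}\bigip{V^{(j)}_k}{V^{(j')}_k}=\sum_{t=1}^{d}\exp\!\left(2\pi i\,\frac{(j'-j)t}{d}\right)=0
\end{equation*}
for $j\neq j'$, because the per-block Gram contributions are exactly the partial geometric sums over the cells of the partition. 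Without this (or an equivalent) global coordination device, statement (2) is not established; as written, your argument proves only the equivalence of (1) and (3) and the characterization of individual trace vectors. Note also that the paper itself does not reprove this theorem --- it cites Pereira --- but the appendix carries out precisely the construction you are missing, in the case $m_k=n_k$ relevant to the complementary channels used in the main argument.
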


\section{New results}

\subsection{Condition for a channel to be mixed unitary in terms of trace vectors}

Making use of the notion of trace vectors, we may restate the necessary and sufficient conditions for a channel to be mixed unitary from Theorem \ref{thm:mixedunitarycomplement} in terms of the image of a complementary channel restricted to traceless matrices having an orthonormal basis of trace vectors. 

\begin{theorem}\label{thm:newcharacterization}
 Let~$n$ be a positive integer and let $\Phi:\M_n\rightarrow\M_n$ be a channel having Choi rank $r$. For each integer $N\geq r$, the following statements are equivalent.
 \begin{enumerate}
  \item The channel $\Phi$ is mixed unitary with mixed-unitary rank at most $N$.
  \item For some choice of channel $\Psi:\M_n\rightarrow\M_N$ that is complementary to $\Phi$, there exists an orthogonal basis of trace vectors for the set $\{\Psi(X) :\Tr(X)=0\}\subset\M_N$. 
 \end{enumerate}

\end{theorem}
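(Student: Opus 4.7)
The plan is to reduce Theorem~\ref{thm:newcharacterization} to Theorem~\ref{thm:mixedunitarycomplement} by showing that its ``vanishing diagonal'' condition on some complementary channel is equivalent, up to a unitary change of basis on $\complex^N$, to the existence of an orthogonal basis of trace vectors for the image of a complementary channel restricted to traceless inputs. The key observation is that, because any complementary channel $\Psi$ is trace preserving, every matrix in $\{\Psi(X):\Tr(X)=0\}$ is traceless, so ``$v$ is a trace vector for $\{\Psi(X):\Tr(X)=0\}$'' reduces to the homogeneous condition $\ip{v}{\Psi(X)v}=0$ for all traceless $X\in\M_n$; in particular, trace vectors for this set may be freely rescaled.

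For (1)~$\Rightarrow$~(2), I will invoke Theorem~\ref{thm:mixedunitarycomplement} to obtain a complementary channel $\Psi:\M_n\rightarrow\M_N$ whose image on traceless inputs has vanishing diagonal with respect to the standard basis $\{e_1,\dots,e_N\}$ of $\complex^N$. Since the $k$-th diagonal entry of $\Psi(X)$ is exactly $\ip{e_k}{\Psi(X)e_k}$, the standard basis is itself an orthogonal basis of trace vectors for $\{\Psi(X):\Tr(X)=0\}$, and (2) follows with the same choice of $\Psi$.

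For (2)~$\Rightarrow$~(1), I start from the given complementary channel $\Psi:\M_n\rightarrow\M_N$ together with an orthogonal basis $\{v_1,\dots,v_N\}$ of trace vectors for $\{\Psi(X):\Tr(X)=0\}$. Normalizing each $v_k$ to a unit vector $u_k$ (still a trace vector by homogeneity) and assembling them as the columns of a unitary $V\in\U{N}$, I define $\Psi'(X)=V^*\Psi(X)V$. The $k$-th diagonal entry of $\Psi'(X)$ is then $\ip{u_k}{\Psi(X)u_k}$, which vanishes for every traceless $X$. It remains to verify that $\Psi'$ is itself complementary to $\Phi$; this follows from the characterization recalled in Section~\ref{sec:prelim}: writing $\Psi(X)=W\Psi_{\mathrm{min}}(X)W^*$ for the minimal complementary channel $\Psi_{\mathrm{min}}$ and some isometry $W$, one obtains $\Psi'(X)=(V^*W)\Psi_{\mathrm{min}}(X)(V^*W)^*$, and $V^*W$ is again an isometry because $V$ is unitary. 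Applying Theorem~\ref{thm:mixedunitarycomplement} to $\Psi'$ then yields a mixed-unitary decomposition of $\Phi$ of size at most $N$.

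The only step that is not immediate is the verification that conjugating the output of a complementary channel by a unitary yields another complementary channel, which is where I need the characterization of complementary channels up to isometry. Everything else amounts to the straightforward translation between ``vanishing diagonal in the standard basis'' and ``vanishing quadratic form on some orthonormal basis,'' so I do not anticipate any substantive obstacle beyond correctly invoking that characterization.
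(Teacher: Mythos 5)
Your proposal is correct and follows essentially the same route as the paper: both directions reduce to Theorem~\ref{thm:mixedunitarycomplement}, with the forward direction taking the standard basis as the orthogonal basis of trace vectors and the converse conjugating $\Psi$ by the unitary built from the normalized trace vectors (your $V^*$ is exactly the paper's $U=\sum_k \norm{v_k}^{-1}e_k v_k^*$). The only cosmetic difference is that you justify the vanishing of $\ip{u_k}{\Psi(X)u_k}$ via homogeneity of the trace-vector condition on traceless matrices, while the paper carries the $\norm{v_k}^{-2}$ factor and uses $\Tr(\Psi(X))=\Tr(X)=0$ directly; both are valid.
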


\begin{proof}
  First suppose that (2) holds. Let $\Psi:\M_n\rightarrow\M_N$ be a channel complementary to $\Phi$ and let $\{v_1,\dots,v_N\}\subset\complex^N$ be an orthogonal basis of trace vectors for the set $\{\Psi(X) :\Tr(X)=0\}$. Define a unitary matrix $U\in\U{N}$ as 
 \begin{equation}
   U = \sum_{k=1}^N\frac{1}{\norm{v_k}}e_kv_k^*
 \end{equation}
and define a linear mapping $\Xi:\M_n\rightarrow\M_N$ as $\Xi(X) = U\Psi(X)U^*$ for every $X\in\M_n$. This mapping is also a channel that is complementary to $\Phi$. Moreover, for each traceless matrix $X\in\M_n$, the diagonal entries of~$\Xi(X)$ are equal to
\begin{equation}
 \ip{e_k}{\Xi(X)e_k} = \frac{1}{\norm{v_k}^2}\ip{v_k}{\Psi(X)v_k} = \frac{1}{\norm{v_k}^2}\Tr(\Psi(X)) = \frac{1}{\norm{v_k}^2}\Tr(X) = 0
\end{equation}
for each index $k\in\{1,\dots,n\}$, where we make use of the fact that $\Psi$ is trace preserving and that $v_k$ is a trace vector for $\{\Psi(X) :\Tr(X)=0\}$. Therefore $\Xi(X)$ has vanishing diagonal for each traceless matrix $X\in\M_n$. That $\Phi$ is mixed unitary with mixed-unitary rank at most $N$ now follows from Theorem \ref{thm:mixedunitarycomplement}. On the other hand, if statement (1) holds, then, by Theorem \ref{thm:mixedunitarycomplement}, there exists a choice of channel $\Psi:\M_n\rightarrow\M_N$ that is complementary to $\Phi$ such that $\Phi(X)$ has vanishing diagonal for every traceless matrix $X\in\M_n$. For this choice of complementary channel, it is evident that the vectors $e_1,\dots,e_N$ form an orthogonal basis of trace vectors in $\complex^N$ for the set $\{\Psi(X) :\Tr(X)=0\}$, as it holds that 
\begin{equation}
 \ip{e_k}{\Psi(X)e_k} = 0 = \Tr(X) = \Tr(\Psi(X))
\end{equation}
for each $k\in\{1,\dots,N\}$ and every traceless matrix $X\in\M_n$, where in the final equality we make use of the fact that $\Psi$ is trace preserving.
\end{proof}

If the image of the complementary channel $\Psi:\M_n\rightarrow\M_N$ in the proof of Theorem \ref{thm:newcharacterization} can be shown to be embedded inside of a von Neumann algebra, then the characterization of trace vectors for von Neumann algebras in Theorem \ref{thm:rajesh} can be used to bound the mixed-unitary rank of the channel, as the following corollary indicates.
\begin{corollary}\label{cor:imPsiinA}
Let~$n$ be a positive integer, let $\Phi:\M_n\rightarrow\M_n$ be a channel, and let $\Psi:\M_n\rightarrow\M_N$ be a channel complementary to $\Phi$ for some positive integer $N$. If it holds that $\op{im}(\Psi)\subset\mathcal{A}$ for some von Neumann algebra $\mathcal{A}\subset\M_N$ having the form 
\begin{equation}\label{eq:A in direct sum of IM}
 \mathcal{A} = (\I_{m_1}\otimes\M_{n_1})\oplus\cdots\oplus(\I_{m_p}\otimes\M_{n_p})
\end{equation}
for some positive integers $m_1,\dots,m_p$ and $n_1,\dots,n_p$ satisfying $m_k\geq n_k$ for each $k\in\{1,\dots,p\}$, then the channel $\Phi$ is mixed unitary with mixed-unitary rank at most $N$. 
\end{corollary}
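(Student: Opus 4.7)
The plan is to combine the characterization of mixed-unitary rank in Theorem \ref{thm:newcharacterization} with Pereira's theorem (Theorem \ref{thm:rajesh}) applied to the algebra $\mathcal{A}$. The key observation is that any trace vector for the entire von Neumann algebra $\mathcal{A}$ is, a fortiori, a trace vector for the smaller set $\{\Psi(X) : \Tr(X)=0\}$ sitting inside $\mathcal{A}$.

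First I would invoke Theorem \ref{thm:rajesh} directly. Because $\mathcal{A}$ is already presented in the normal form
\[
 \mathcal{A} = (\I_{m_1}\otimes\M_{n_1})\oplus\cdots\oplus(\I_{m_p}\otimes\M_{n_p})
\]
with $m_k\geq n_k$ for every $k$, condition (3) of Theorem \ref{thm:rajesh} is satisfied (with the unitary $U$ being the identity), so condition (2) gives an orthogonal basis $\{v_1,\dots,v_N\}\subset\complex^N$ of trace vectors for $\mathcal{A}$; that is, $\ip{v_j}{Av_j} = \Tr(A)$ for every $A\in\mathcal{A}$ and every $j$.

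Next I would verify that this same basis is an orthogonal basis of trace vectors for the subset $\{\Psi(X) : \Tr(X)=0\}\subset\M_N$. Since $\op{im}(\Psi)\subset\mathcal{A}$, for every traceless $X\in\M_n$ the matrix $\Psi(X)$ lies in $\mathcal{A}$, and therefore
\[
 \ip{v_j}{\Psi(X) v_j} = \Tr(\Psi(X)) = \Tr(X) = 0,
\]
where in the middle equality I use that $v_j$ is a trace vector for $\mathcal{A}$ and in the last equality I use that $\Psi$ is trace preserving. This is precisely the trace-vector condition for the set $\{\Psi(X) : \Tr(X)=0\}$, since that set consists of matrices all having trace $0$.

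With the orthogonal basis of trace vectors for $\{\Psi(X) : \Tr(X)=0\}$ in hand, statement (2) of Theorem \ref{thm:newcharacterization} holds for this choice of complementary channel $\Psi$, and statement (1) of that theorem then gives the desired conclusion that $\Phi$ is mixed unitary with mixed-unitary rank at most $N$. I do not expect any genuine obstacle here; the entire argument is a bookkeeping reduction, and the only subtlety worth stating clearly is that although Theorem \ref{thm:newcharacterization} asks for trace vectors only for the image of the traceless subspace under $\Psi$, the hypothesis $\op{im}(\Psi)\subset\mathcal{A}$ allows us to work with the much more structured full algebra $\mathcal{A}$ and apply Pereira's theorem to it.
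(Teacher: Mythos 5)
Your proposal is correct and follows exactly the route the paper intends: apply Theorem \ref{thm:rajesh} (condition (3) $\Rightarrow$ condition (2)) to the algebra $\mathcal{A}$ to obtain an orthogonal basis of trace vectors, observe that these remain trace vectors for the subset $\{\Psi(X):\Tr(X)=0\}\subset\mathcal{A}$, and conclude via Theorem \ref{thm:newcharacterization}. The paper leaves this combination implicit as the content of the corollary, and your write-up fills it in correctly with no gaps.
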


\subsection{Direct sums of completely depolarizing channels have minimal mixed-unitary rank}

 Let $p$ be a positive integer, let $m_1,\dots,m_p$ and~$n_1,\dots,n_p$ be positive integers, define the number $d=m_1n_1+\cdots+m_pn_p$, and let $\Phi:\M_d\rightarrow\M_d$ be the channel defined as 
 \begin{equation}\label{eq:Phimn}
  \Phi(X) = \bigl((\I_{\M_{m_1}}\otimes\Omega_{n_1})\oplus\cdots\oplus(\I_{\M_{m_p}}\otimes\Omega_{n_p})\bigr)(X)
 \end{equation}
 for every $X\in\M_d$. It is evident that the Choi rank of the channel in~\eqref{eq:Phimn} is equal to $N$, where one defines the integer $N=n_1^2+\cdots+n_p^2$. Moreover, this channel is itself a mixed-unitary channel as it can be expressed as a direct sum of mixed-unitary channels. Using the results of the previous section, we now show that $\Phi$ has mixed-unitary rank also equal to $N$ (and thus it is said to have \emph{minimal} mixed-unitary rank). To see this, recall from the observations in Section \ref{sec:prelim} that one channel that is complementary to $\Phi$ is the channel $\Psi:\M_d\rightarrow\M_N$ defined as  
 \begin{equation}
 \Psi\bigl((X_1\otimes Y_1)\oplus\cdots\oplus(X_p\otimes Y_p)\bigr) = \left(\frac{\Tr(X_1)}{n_1} \I_{n_1}\otimes Y_1\right)\oplus\cdots\oplus\left(\frac{\Tr(X_p)}{n_p} \I_{n_p}\otimes Y_p\right)
\end{equation}
for every $X_1\in\M_{m_1},\dots,X_p\in\M_{m_p}$ and $Y_1\in\M_{n_1},\dots,Y_p\in\M_{n_p}$. For this channel, it is evident that 
\begin{equation}
 \op{im}(\Psi) = (\I_{n_1}\otimes\M_{n_1})\oplus\cdots \oplus(\I_{n_p}\otimes\M_{n_p}). 
\end{equation}
As $\op{im}(\Psi)$ is a von Neumann algebra of the form in~\eqref{eq:A in direct sum of IM}, by Corollary \ref{cor:imPsiinA} we see that $\Phi$ also has mixed-unitary rank equal to $N$.

\subsection{Twirling channels have minimal mixed-unitary rank}\label{sec:twirlingrank}

It is now straightforward to verify the central claim of this work, which states that every twirling channel has minimal mixed-unitary rank. Recall from Section \ref{sec:representation} that every twirling channel is of the following form.
Let $d$ be a positive integer, let $G$ be a compact group, and let $\rho:G\rightarrow\U{d}$ be a unitary representation of $G$. There exists a positive integer $p$, positive integers $m_1,\dots,m_p$ and~$n_1,\dots,n_p$, and a choice of unitary matrix $U\in\U{d}$ (where one defines $d=m_1n_1+\cdots+m_pn_p$) such that 
\begin{equation}
 U\Phi_\rho(U^*XU)U^* = \bigl((\I_{\M_{m_1}}\oplus\Omega_{n_1})\oplus\cdots\oplus (\I_{\M_{m_p}}\oplus\Omega_{n_p})\bigr)(X)
\end{equation}
for every $X\in\M_d$. (That is, $\Phi_\rho$ is the projection onto the commutant algebra of $\rho$.) From this observation, along with the observations in the previous section, it is now evident that $\Phi_\rho$ is a mixed-unitary channel with minimal mixed-unitary rank. In particular, this twirling channel has Choi rank and mixed-unitary rank both equal to~$n_1^2+\cdots+n_p^2$. This number is also equal to the dimension of the von Neumann algebra generated by the representation $\rho$, which is the set $\mathcal{A}_\rho\subset\M_d$ defined as
\begin{equation}
 \mathcal{A}_\rho=\op{comm}(\op{comm}(\{\rho(g):g\in G\})).
\end{equation}
Note that the \emph{commutant algebra} of the representation $\rho$ is the von Neumann algebra of~$\M_d$ that contains all elements of $\M_d$ that commute with $\rho(g)$ for every $g\in G$. The commutant algebra of $\rho$ can be decomposed as
\begin{equation}
 \op{comm}(\{\rho(g):g\in G\}) = U^*\Bigl((\M_{m_1}\otimes\I_{n_1})\oplus\cdots\oplus(\M_{m_p}\otimes\I_{n_p})\Bigr)U,
\end{equation}
where $U\in\U{d}$ is the same unitary matrix as chosen above. The von Neumann algebra generated by $\rho$ is the double commutant of the representation and can therefore be expressed as
\begin{equation}
 U\mathcal{A}_\rho U^* = (\I_{m_1}\otimes\M_{n_1})\oplus\cdots\oplus(\I_{m_p}\otimes\M_{n_p}).
\end{equation}
It evident that the dimension of the von Neumann algebra is equal to $\dim(\mathcal{A}_\rho) = N$, where one defines integer $N = n_1^2+\cdots+n_p^2$. These observations imply the existence of a probability distribution $(p_1,\dots,p_N)$ and unitary matrices $U_1,\dots,U_N\in\U{d}$ such that $\Phi_\rho$ can be expressed as 
\begin{equation} \label{eq:Phirho}
 \Phi_\rho(X) = \sum_{k=1}^N p_k U_kXU_k
\end{equation}
for every $X\in\M_d$. We collect these ideas in the following theorem.

\begin{theorem}\label{thm:maintheorem}
 Every finite-dimensional twirling channel has minimal mixed-unitary rank, having mixed-unitary rank equal to the dimension of the von Neumann algebra of the underlying representation.
\end{theorem}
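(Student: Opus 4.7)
The plan is to assemble the main theorem from three ingredients already established in the paper: the Peter--Weyl decomposition of the representation, the fact that direct sums of the form $(\I_{\M_{m_1}}\otimes\Omega_{n_1})\oplus\cdots\oplus(\I_{\M_{m_p}}\otimes\Omega_{n_p})$ have minimal mixed-unitary rank, and the identification of the resulting integer $n_1^2+\cdots+n_p^2$ with the dimension of the double commutant $\mathcal{A}_\rho$.

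First I would use the Peter--Weyl decomposition recalled in Section~\ref{sec:representation}: for any compact group $G$ and unitary representation $\rho:G\rightarrow\U{d}$, there exist a unitary $U\in\U{d}$, inequivalent irreducible representations $\rho_k:G\rightarrow\U{n_k}$, and multiplicities $m_k$ with $d=\sum_k m_kn_k$ such that $U\rho(g)U^* = \bigoplus_k (\I_{m_k}\otimes\rho_k(g))$. Since the twirling channel is precisely the conditional expectation onto the commutant algebra of the representation, this gives the identification
\begin{equation}
 U\Phi_\rho(U^*XU)U^* = \bigl((\I_{\M_{m_1}}\otimes\Omega_{n_1})\oplus\cdots\oplus (\I_{\M_{m_p}}\otimes\Omega_{n_p})\bigr)(X)
\end{equation}
for all $X\in\M_d$. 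Conjugation by $U$ preserves both the Choi rank and the mixed-unitary rank, so the problem reduces to the direct-sum channel on the right.

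Next I would invoke the result of the previous subsection, which shows that such a direct-sum channel has mixed-unitary rank equal to its Choi rank $N=n_1^2+\cdots+n_p^2$. The argument there identifies an explicit complementary channel $\Psi$ whose image sits inside the von Neumann algebra $(\I_{n_1}\otimes\M_{n_1})\oplus\cdots\oplus(\I_{n_p}\otimes\M_{n_p})$, a form to which Corollary~\ref{cor:imPsiinA} applies, yielding mixed-unitary rank at most $N$; combined with the trivial lower bound from the Choi rank, equality follows.

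Finally, I would identify $N$ with the dimension of the von Neumann algebra generated by $\rho$. Taking the double commutant of the representation and using the decomposition of the commutant as $U^*((\M_{m_1}\otimes\I_{n_1})\oplus\cdots\oplus(\M_{m_p}\otimes\I_{n_p}))U$, one obtains $U\mathcal{A}_\rho U^* = (\I_{m_1}\otimes\M_{n_1})\oplus\cdots\oplus(\I_{m_p}\otimes\M_{n_p})$, whose complex dimension is precisely $n_1^2+\cdots+n_p^2=N$. This ties the mixed-unitary rank to the representation-theoretic invariant stated in the theorem. All of the technical work has already been carried out in the preceding subsections, so there is no real obstacle remaining: the main theorem is essentially a bookkeeping step combining Peter--Weyl with Corollary~\ref{cor:imPsiinA}. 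The only subtlety worth flagging is ensuring that the ordering of the tensor factors in $(\I_{n_k}\otimes\M_{n_k})$ matches the hypothesis $m_k\geq n_k$ of Theorem~\ref{thm:rajesh}, which is automatic here because the multiplicity $n_k$ of the copies coming from the complementary channel equals the second tensor factor.
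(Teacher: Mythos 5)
Your proposal is correct and follows essentially the same route as the paper: Peter--Weyl reduction of $\Phi_\rho$ to the direct-sum channel $(\I_{\M_{m_1}}\otimes\Omega_{n_1})\oplus\cdots\oplus(\I_{\M_{m_p}}\otimes\Omega_{n_p})$, the minimal-rank result for that channel via the complementary channel whose image is $(\I_{n_1}\otimes\M_{n_1})\oplus\cdots\oplus(\I_{n_p}\otimes\M_{n_p})$ and Corollary~\ref{cor:imPsiinA}, and the identification of $N=n_1^2+\cdots+n_p^2$ with $\dim(\mathcal{A}_\rho)$ via the double commutant. Your closing remark that $m_k=n_k$ makes the hypothesis $m_k\geq n_k$ of Theorem~\ref{thm:rajesh} automatic is exactly the observation the paper relies on implicitly.
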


The proof of Theorem \ref{thm:maintheorem} essentially makes use of Theorem \ref{thm:rajesh}. The proof of Theorem \ref{thm:rajesh} in \cite{Pereira2003} is constructive, which allows us to \emph{explicitly} construct minimal mixed-unitary decompositions for twirling channels. The remainder of this paper is dedicated to these explicit constructions and examples of twirling channels. 

\section{Explicit construction of minimal mixed-unitary decompositions}
\label{sec:explicit}
In this section we construct a minimal mixed-unitary decompositions for channels of the form in \eqref{eq:Phimn}. The following construction is based on the proof of Theorem \ref{thm:rajesh} (see Theorem 3.2.4 in in \cite{Pereira2003}). Further details and a proof of this construction are provided in Appendix \ref{app:construction}. This section concludes with an explicit example in the simplest non-trivial case.

Let $p$ be a positive integer, let $m_1,\dots,m_p$ and $n_1,\dots,n_p$ be positive integers, define the integer $d= m_1n_1 + \cdots + m_pn_p$, and consider the channel $\Phi:\M_d\rightarrow\M_d$ defined as
\begin{equation}\label{eq:Phimn2}
 \Phi = (\I_{\M_{m_1}}\otimes\Omega_{n_1})\oplus \cdots \oplus(\I_{\M_{m_p}}\otimes\Omega_{n_p}).
\end{equation}
This channel has both Choi rank and mixed-unitary rank equal to $N=n_1^2+\cdots+n_p^2$. For each $k\in\{1,\dots,N\}$ and $\ell\in\{1,\dots,p\}$, define a matrix $U_{k,\ell}\in\M_{n_\ell}$ as
\begin{equation}
 U_{k,\ell} = \frac{1}{n_\ell}\sum_{a,b=1}^{n_\ell}\left(\sum_{c=1}^{n_\ell}\exp\left(2\pi i\frac{c(b-a)}{n_\ell}\right)\exp\left(2\pi i\cdot k\frac{N_\ell+(a-1)n_\ell+c}{N}\right)\right)E_{a,b},
\end{equation}
where one defines the numbers $N_1=0$ and $
 N_\ell = n_1^2+\cdots + n_{\ell-1}^2$ for each $\ell\in\{2,\dots,p\}$. Now define matrices $U_1,\dots,U_N\in\M_d$  as
\begin{equation}\label{eq:UK1toUKp}
 U_k = (\I_{m_1}\otimes U_{k,1})\oplus\cdots\oplus(\I_{m_p}\otimes U_{k,p})
\end{equation}
for each $k\in\{1,\dots,N\}$. It may be verified (see Appendix \ref{app:construction}) that each of the matrices $U_1,\dots,U_N$ is unitary and that
\begin{equation}\label{eq:PhiNMU}
 \Phi(X) = \frac{1}{N}\sum_{k=1}^N U_kXU_k^*
\end{equation}
holds for every $X\in\M_{d}$. As the channel $\Phi$ is known to have mixed-unitary rank equal to $N$, the expression in \eqref{eq:PhiNMU} is therefore a minimal mixed-unitary decomposition of the channel defined in~\eqref{eq:Phimn2}. We also remark that the expression in \eqref{eq:PhiNMU} means that $\Phi$ can be expressed as the \emph{average} of the $N$ unitary channels defined by the unitary matrices $U_1,\dots,U_N\in\U{d}$.

\begin{example}
This example uses the construction outlined above to construct a minimal mixed-unitary decomposition of the smallest non-trivial example of a channel of the form in \eqref{eq:Phimn2}.  This is the channel $\Phi:\M_3\rightarrow\M_3$ defined as
\begin{equation}
 \Phi = \Omega_{2}\oplus\Omega_1
\end{equation}
 (in which case one has $p=2$, $n_1=2$, and $m_1=m_2=n_2=1$). This channel has both Choi rank and mixed-unitary rank equal to 5. For this example, the resulting matrices from the expression in \eqref{eq:UK1toUKp} are the matrices $U_1,U_2,U_3,U_4,U_5\in\M_3$ defined as
 \begin{align*}
 U_1 &= \left(
\begin{array}{ccc}
 -\frac{1}{4} +i\frac{\sqrt{5+2\sqrt{5}}}{4}  & -\frac{\sqrt{5}}{4}-i\frac{\sqrt{5-2 \sqrt{5}}}{4} & 0 \\
 \frac{\sqrt{5}}{4}-i\frac{\sqrt{5-2 \sqrt{5}}}{4}  & -\frac{1}{4}-i\frac{\sqrt{5+2\sqrt{5}}}{4}  & 0 \\
 0 & 0 & 1 \\
\end{array}
\right)\\
U_2&=\left(
\begin{array}{ccc}
 -\frac{1}{4}-i\frac{\sqrt{5-2 \sqrt{5}}}{4}   & \frac{\sqrt{5}}{4}-i\frac{\sqrt{5+2\sqrt{5}}}{4}  & 0 \\
 -\frac{\sqrt{5}}{4}-i\frac{\sqrt{5+2\sqrt{5}}}{4}   & -\frac{1}{4}+i\frac{\sqrt{5-2 \sqrt{5}}}{4}  & 0 \\
 0 & 0 & 1 \\
\end{array}
\right)\\
U_3&=\left(
\begin{array}{ccc}
 -\frac{1}{4}+i\frac{\sqrt{5-2 \sqrt{5}}}{4}   & \frac{\sqrt{5}}{4}+i\frac{\sqrt{5+2\sqrt{5}}}{4}   & 0 \\
 -\frac{\sqrt{5}}{4}+i\frac{\sqrt{5+2\sqrt{5}}}{4}  & -\frac{1}{4}-i\frac{\sqrt{5-2 \sqrt{5}}}{4}  & 0 \\
 0 & 0 & 1 \\
\end{array}
\right)\\
U_4&=\left(
\begin{array}{ccc}
 -\frac{1}{4}-i\frac{\sqrt{5+2\sqrt{5}}}{4}  & -\frac{\sqrt{5}}{4}+i\frac{\sqrt{5-2 \sqrt{5}}}{4}  & 0 \\
 \frac{\sqrt{5}}{4}+i\frac{\sqrt{5-2 \sqrt{5}}}{4}  & -\frac{1}{4}+i\frac{\sqrt{5+2\sqrt{5}}}{4}  & 0 \\
 0 & 0 & 1 \\
\end{array}
\right)\\
U_5&=\left(
\begin{array}{ccc}
 1 & 0 & 0 \\
 0 & 1 & 0 \\
 0 & 0 & 1 \\
\end{array}
\right).
\end{align*}
These matrices are unitary and satisfy
\begin{equation}
 (\Omega_2\oplus\Omega_1)(X) = \frac{1}{5}\sum_{k=1}^5 U_kXU_k^*
\end{equation}
for every $X\in\M_3$. Interestingly, these matrices also satisfy 
\begin{equation}
 \ip{U_k}{U_{k'}} = \left\{\begin{array}{ll}3 & \text{if }k=k'\\ \frac{1}{2} & \text{if }k\neq k'\end{array}\right.
\end{equation}
for each pair of indices $k,k'\in\{1,\dots,5\}$. (That is, these matrices are \emph{equiangular} in $\M_3$.)
\end{example}

\section{Examples of twirling channels}

In this section we explore some examples of twirling channels and their minimal mixed-unitary decompositions.

\subsection{Permutation-twirling channel}

Let $n$ be a positive integer denote the symmetric group of order~$n$ by $\op{Sym}(n)$. For each permutation $\pi\in\op{Sym}(n)$, one defines the permutation matrix $V_\pi\in\M_d$ as
\begin{equation}
 V_\pi = \sum_{k=1}^n E_{\pi(k),k}
\end{equation}
This is naturally a unitary representation of $\op{Sym}(n)$, but it is not irreducible. This representation decomposes as a direct sum of the 1-dimensional \emph{trivial representation} and the $(n-1)$-dimensional \emph{natural representation} of $\op{Sym(n)}$, each of which are irreducible. (See, e.g., Section 4.4.3 of \cite{Goodman2009}.) One may define the \emph{permutation-twirling channel} $\Phi:\M_n\rightarrow\M_n$ as
\begin{equation}\label{eq:permtwirl}
\Phi_{\op{Sym}(n)}(X) = \frac{1}{n!}\sum_{\pi\in\op{Sym}(n)} V_\pi XV_\pi^*
\end{equation}
for every $X\in\M_n$. From the observations above regarding the decomposition of this representation (and from the results in Section \ref{sec:twirlingrank}), we see that the permutation-twirling channel of dimension $n$, as defined in \eqref{eq:permtwirl}, has mixed-unitary rank equal to its Choi rank, which is equal to 
\begin{equation}
 \op{rank}(J(\Phi_{\op{Sym}(n)})) = (n-1)^2 + 1.
\end{equation}
In fact, the commutator of the set of $n\times n$ permutation matrices can be expressed as
\begin{equation}
 \op{comm}(\{V_\pi:\pi\in\op{Sym}(n)\}) = \op{span}\Bigl\{\frac{1}{n}J_n,\, \I_n-\frac{1}{n}J_n\Bigr\},
\end{equation}
where $J_n$ denotes the $n\times n$ matrix whose entries are all equal to 1. The matrices $J_n/n$ and $\I_n-J_n/n$ are projection matrices having ranks $1$ and $n-1$ respectively. The permutation-twirling channel may therefore be alternatively expressed as
\begin{equation}
 \Phi_{\op{Sym}(n)}(X) = \Bigip{\frac{1}{n}J_n}{X} \frac{1}{n}J_n + \Bigip{\I_n - \frac{1}{n} J_n}{X} \frac{1}{n-1}\left(\I_n - \frac{1}{n} J_n\right).
\end{equation}
The channel $\Phi_{\op{Sym}(n)}$ has both Choi rank and mixed-unitary rank equal to $(n-1)^2+1$ by Theorem~\ref{thm:maintheorem}. Moreover, the construction in Section \ref{sec:explicit} allows us to explicitly construct a minimal mixed-unitary decomposition for $\Phi_{\op{Sym}(n)}$. To do so, first define a unitary matrix $U\in\U{n}$ as
\begin{equation}
 U= \frac{1}{\sqrt{n}}\sum_{a,b=1}^n \exp\left(2\pi i\frac{ab}{n}\right)E_{a,b}
\end{equation}
which satisfies
\begin{equation}
 \frac{1}{n}U J_n U^* = \begin{pmatrix} 0 & \cdots & 0 & 0\\ \vdots & \ddots & \vdots & \vdots \\
 0 & \cdots & 0 & 0\\
 0 & \cdots & 0 & 1\end{pmatrix}
 \qquad\text{and}\qquad
  U \left(\I_n-\frac{1}{n}J_n\right) U^* = \begin{pmatrix} 1 & \cdots & 0 & 0\\ \vdots & \ddots & \vdots & \vdots \\
 0 & \cdots & 1 & 0\\
 0 & \cdots & 0 & 0\end{pmatrix}
\end{equation}
and thus the permutation twirling channel may be expressed as
\begin{equation}
 \Phi_{\op{Sym}(n)}(X) = U^*\bigl((\Omega_{n-1}\oplus\Omega_1)(UXU^*)\bigr)U.
\end{equation}
One may therefore construct a minimal mixed-unitary decomposition for $\Phi_{\op{Sym}(n)}$ as follows. Following the construction in Section \ref{sec:explicit}, one may define the matrices $U_1,\dots,U_{(n-1)^2+1}\in\M_n$ as
\begin{equation}
 U_k = \left(\frac{1}{n-1}\sum_{a,b,c=1}^{n-1}\exp\left(2\pi i \frac{c(b-a)}{n-1}\right)\exp\left(2\pi i\cdot k \frac{(a-1)(n-1)+c}{(n-1)^2+1}\right)E_{a,b}\right) + E_{n,n}
\end{equation}
for each $k\in\{1,\dots,(n-1)^2+1\}$ such that
\begin{equation}
 \Phi_{\op{Sym}(n)}(X) = \frac{1}{(n-1)^2+1}\sum_{k=1}^{(n-1)^2+1}(U^*U_kU)X(U^*U_kU)^*
\end{equation}
holds for every $X\in\M_n$.

\subsection{Werner twirling channel}

Let $n$ be a positive integer. The \emph{Werner twirling channel} of dimension $n$ is the linear mapping $\Xi:\M_{n^2}\rightarrow\M_{n^2}$ defined as 
\begin{equation}
\Xi(X) = \int_{\U{n}} \d\mu(U)\,(U\otimes U)X(U\otimes U)^* 
\end{equation}
for every $X\in\M_{n^2}$. The representation $U\mapsto U\otimes U$ decomposes into the direct sum of irreducible representations acting on the symmetric and anti-symmetric subspaces of $\complex^n\otimes\complex^n$. Define now a pair of projection matrices $\Pi_0,\Pi_1\in\M_{n^2}$ as
\begin{equation}
 \Pi_0 = \frac{1}{2}\I_n\otimes\I_n + \frac{1}{2}\sum_{j,k=1}^n E_{j,k}\otimes E_{k,j}
 \qquad\text{and}\qquad
  \Pi_0 = \frac{1}{2}\I_n\otimes\I_n - \frac{1}{2}\sum_{j,k=1}^n E_{j,k}\otimes E_{k,j},
\end{equation}
which are the projections onto the symmetric and anti-symmetric subspaces respectively and have ranks given by
\[
 \op{rank}(\Pi_0) = \binom{n+1}{2}\qquad\text{and}\qquad \op{rank}(\Pi_1) = \binom{n}{2}.
\]
The Werner twirling channel may alternatively be expressed as 
\begin{equation}
 \Xi(X) = \frac{1}{\binom{n+1}{2}}\ip{\Pi_0}{X}\Pi_0 + \frac{1}{\binom{n}{2}}\ip{\Pi_1}{X}\Pi_1.
\end{equation}
It follows from Theorem \ref{thm:maintheorem} that $\Xi$ has both Choi rank and mixed-unitary rank equal to 
\begin{equation}
\op{rank}(J(\Xi)) = \op{rank}(\Pi_0)^2 + \op{rank}(\Pi_1)^2  =\binom{n+1}{2}^2 + \binom{n}{2}^2= \frac{n^4+n^2}{2}.
\end{equation}
Moreover, for any choice of unitary matrix $U\in\U{n^2}$ such that 
\begin{equation}
 U\Pi_0U^* = \I_{\binom{n+1}{2}} \oplus 0 \qquad\text{and}\qquad U\Pi_1U^* = 0\oplus\I_{\binom{n}{2}} ,
\end{equation}
the Werner twirling channel may be expressed as
\begin{equation}
 \Xi(X) = U\bigl(\bigl(\Omega_{\binom{n+1}{2}}\oplus\Omega_{\binom{n}{2}}\bigr)(UXU^*)\bigr)U^*
\end{equation}
for every $X\in\M_{n^2}$. Following the construction in Section \ref{sec:explicit}, there is a choice of unitary matrices $U_1,\dots,U_{(n^4+n^2)/2}\in\U{n^2}$ such that the Werner twirling channel may be expressed as
\begin{equation}
 \Xi(X) = \frac{2}{n^4+n^2} \sum_{k=1}^{\frac{n^4+n^2}{2}} U_kXU_k^*.
\end{equation}
for every $X\in\M_{n^2}$.

\section{Acknowledgements}

MG is supported by the Natural Sciences and Engineering Research Council (NSERC) of Canada, by the Canadian Institute for Advanced Research (CIFAR), and through funding provided to IQC by the Government of Canada.
JL is supported by funding provided by IQC and the University of Guelph.

\appendix

\section{Explicit minimal mixed-unitary decompositions}
\label{app:construction}

This appendix provides the details of and the proof for the explicit construction from Section \ref{sec:explicit}. The following construction is based on the work of Pereira (see the proof of Theorem 3.2.4 in \cite{Pereira2003}).

Let $p$ be a positive integer, let $m_1,\dots,m_p$ and $n_1,\dots,n_p$ be positive integers, define the integer $d= m_1n_1 + \cdots + m_pn_p$, and consider the channel $\Phi:\M_d\rightarrow\M_d$ defined as
\begin{equation}
 \Phi = (\I_{\M_{m_1}}\otimes\Omega_{n_1})\oplus \cdots \oplus(\I_{\M_{m_p}}\otimes\Omega_{n_p}).
\end{equation}
This channel has both Choi rank and mixed-unitary rank equal to $N=n_1^2+\cdots+n_p^2$. We now proceed with an explicit construction of a minimal mixed-unitary decomposition. First note that one may partition the set of integers from $1$ to $N$ as
\begin{align}\label{eq:Npartition}
 \{1,\dots,N\} 
 &=\bigcup_{\ell=1}^p \bigl\{N_{\ell}+ j \, \big|\, 1\leq j\leq n_\ell^2\bigr\}\nonumber\\
 &= \bigcup_{\ell=1}^p\bigl\{N_{\ell} + (a-1)n_\ell+b\, \big|\, 1\leq a,b\leq n_\ell\bigr\},
\end{align}
where one defines the integers $N_1,\dots,N_{p}$ as $N_1=0$ and 
\begin{equation}
  N_{\ell}=n_1^2 + \cdots + n_{\ell-1}^2
\end{equation}
for each $\ell\in\{2,\dots,p\}$. For each $k\in\{1,\dots,N\}$ and $\ell\in\{1,\dots,p\}$, define a matrix $U_{k,\ell}\in\M_{n_\ell}$ as
\begin{equation}
 U_{k,\ell} = \frac{1}{n_\ell}\sum_{a,b=1}^{n_\ell}\left(\sum_{c=1}^{n_\ell}\exp\left(2\pi i\frac{c(b-a)}{n_\ell}\right)\exp\left(2\pi i\cdot k\frac{N_\ell+(a-1)n_\ell+c}{N}\right)\right)E_{a,b}.
\end{equation}
It may be verified that each of these matrices is unitary, as
\begin{align*}
 U_{k,\ell}U_{k,\ell}^* 
    & =\frac{1}{n_\ell^2} \sum_{a,b,c=1}^{n_\ell}\sum_{a',b',c'=1}^{n_\ell}\exp\left(2\pi i \frac{cb-ca-c'b'+c'a'}{n_\ell}\right)\exp\left(2\pi i \cdot k\frac{(a-a')n_\ell+c-c'}{N}\right) E_{a,b}E_{b',a'}\\
    & = \frac{1}{n_\ell^2} \sum_{a,a',c,c'=1}^{n_\ell}\sum_{b=1}^{n_\ell}\exp\left(2\pi i \frac{b(c-c')}{n_\ell}\right)\exp\left(2\pi i \frac{c'a'-ca}{n_\ell}\right)\exp\left(2\pi i\cdot k \frac{(a-a')n_\ell+c-c'}{N}\right) E_{a,a'}\\
    & = \frac{1}{n_\ell} \sum_{a,a'=1}^{n_\ell}\sum_{c=1}^{n_\ell}\exp\left(2\pi i \frac{c(a'-a)}{n_\ell}\right)\exp\left(2\pi i \frac{k\bigl((a-a')n_\ell\bigr)}{N}\right) E_{a,a'}\\
    & = \sum_{a=1}^{n_\ell} E_{a,a}\\
    & = \I_{n_\ell},
\end{align*}
where, for each $\ell\in\{1,\dots,p\}$, we use the fact that
\begin{equation}
 \sum_{c=1}^{n_\ell} \exp\left(2\pi i \frac{c(a-b)}{n_\ell}\right)= \left\{\begin{array}{ll}  n_\ell & \text{if }a=b\\ 0 & \text{otherwise} 
\end{array}
\right.
\end{equation}
holds for all pairs of numbers $a,b\in\{1,\dots,n_\ell\}$. For each $\ell\in\{1,\dots,p\}$, the collection of matrices $U_{1,\ell},\dots,U_{N,\ell}\in\U{n_\ell}$ comprises a mixed-unitary decomposition for the completely depolarizing channel $\Omega_{n_\ell}$. Indeed, for every $X\in\M_{n_\ell}$, one has that
\begin{align*}
 \frac{1}{N}\sum_{k=1}^N U_{k,\ell}XU_{k,\ell}^* 
   &= \frac{1}{Nn_\ell^2}\sum_{j=1}^N\sum_{a,b,c=1}^{n_\ell}\sum_{a',b',c'=1}^{n_\ell}\exp\left(2\pi i \frac{cb-ca-c'b'+c'a'}{n_\ell}\right)\\
   &\hspace{2.5in}\times\exp\left(2\pi i\cdot k \frac{(a-a')n_\ell+c-c'}{N}\right) E_{a,b}XE_{b',a'}\\
   &=\frac{1}{Nn_\ell^2}\sum_{a,b,c=1}^{n_\ell}\sum_{a',b',c'=1}^{n_\ell}\langle E_{b,b'},X\rangle\exp\left(2\pi i \frac{cb-ca-c'b'+c'a'}{n_\ell}\right)\\
   &\hspace{2.5in}\times\sum_{k=1}^N\exp\left(2\pi i\cdot k\frac{(a-a')n_\ell+c-c'}{N}\right) E_{a,a'}\\
   &=\frac{1}{n_\ell^2}\sum_{a,b,b'=1}^{n_\ell}\sum_{c=1}^{n_\ell}\langle E_{b,b'},X\rangle\exp\left(2\pi i \frac{c(b-b')}{n_\ell}\right) E_{a,a}\\
   &=\frac{1}{n_\ell}\sum_{a,b=1}^{n_\ell}\langle E_{b,b},X\rangle E_{a,a}\\
   &=\frac{\Tr(X)}{n_\ell}\I_{n_\ell}\\
   & = \Omega_{n_\ell}(X),
\end{align*}
where we use the fact that
\begin{equation}
 \sum_{j=1}^N\exp\left(2\pi i \cdot k\frac{(a-a')n_\ell+c-c'}{N}\right) = \left\{\begin{array}{ll}  N & \text{if }a=a'\text{ and } c=c'\\ 0 & \text{otherwise} 
\end{array}
\right.
\end{equation}
holds for all choices of numbers $a,a',c,c'\in\{1,\dots,n_\ell\}$. Moreover, making use of the partition in~\eqref{eq:Npartition}, for every pair of indices $\ell,\ell'\in\{1,\dots,p\}$ one has that
\begin{multline}\label{eq:sumjtoNkkaacc}
 \sum_{k=1}^N\exp\left(2\pi i\cdot k \frac{(N_\ell +(a-1)n_\ell + c) - (N_{\ell'} +(a'-1)n_{\ell'} + c')}{N}\right) \\= \left\{\begin{array}{ll}  N & \text{if }\ell=\ell',\text{ }a=a',\text{ and } c=c'\\ 0 & \text{otherwise} 
\end{array}
\right.
\end{multline}
for all choices of numbers $a,c\in\{1,\dots,n_\ell\}$ and $a',c'\in\{1,\dots,n_{\ell'}\}$. Consider now a collection of linear mappings $\Phi_{\ell,\ell'}:\M_{n_\ell,n_{\ell'}}\rightarrow\M_{n_\ell,n_{\ell'}}$ defined for each pair of indices $\ell,\ell'\in\{1,\dots,p\}$ that map $n_\ell\times n_{\ell'}$ matrices to $n_\ell\times n_{\ell'}$ matrices as
\begin{equation}
 \Phi_{\ell,\ell'}(X) = \frac{1}{N}\sum_{k=1}^N U_{k,\ell}XU_{k,\ell'}^*
\end{equation}
for every $X\in\M_{n_\ell,n_{\ell'}}$. In the case when $\ell=\ell'$, this map is precisely $\Phi_{\ell,\ell}=\Omega_{n_\ell}$. However, in the case when $\ell\neq \ell'$, one may make use of the equality in \eqref{eq:sumjtoNkkaacc} to see that
\begin{align*}
 \Phi_{\ell,\ell'}(X) 
  & = \frac{1}{Nn_\ell n_{\ell'}}\sum_{a,b,c=1}^{n_\ell}\sum_{a',b',c'=1}^{n_{\ell'}}
  \exp\left(2\pi i \frac{c(b-a)}{n_\ell}\right)\exp\left(2\pi i \frac{c'(a'-b')}{n_{\ell'}}\right)\cdot\\
   & \hspace{1in}\sum_{k=1}^N\exp
  \left(2\pi i \cdot k\frac{(N_\ell + (a-1)n_\ell+c)-(N_{\ell'}+(a'-1)n_{\ell'}+c')}{N}\right) E_{a,b}XE_{b',a'}\\ & =0
\end{align*}
holds for every $X\in\M_{n_\ell,n_{\ell'}}$ and thus $\Phi_{\ell,\ell'}=0$. Finally, define matrices $U_1,\dots,U_N\in\U{d}$  as
\begin{equation}
 U_k = (\I_{m_1}\otimes U_{k,1})\oplus\cdots\oplus(\I_{m_p}\otimes U_{k,p})
\end{equation}
for every $k\in\{1,\dots,N\}$, where we recall that $d=m_1n_1+\cdots+m_pn_p$. For every $X\in\M_{d}$, there exist matrices $X_{\ell,\ell'}\in\M_{m_\ell n_\ell,m_{\ell'}n_{\ell'}}$ for each pair of indices $\ell,\ell'\in\{1,\dots,p\}$ such that $X$ may be expressed in block form as
\begin{equation}
 X = \begin{pmatrix} X_{1,1} & \cdots & X_{1,p}\\ \vdots & \ddots& \vdots \\ X_{p,1} & \cdots & X_{p,p}\end{pmatrix}.
\end{equation}
For all such matrices, one has that
\begin{align*}
 \frac{1}{N}\sum_{k=1}^{N}U_k XU_k^*
 & = \frac{1}{N}\sum_{k=1}^{N}U_k\begin{pmatrix} X_{1,1}& \cdots & X_{1,p}\\ \vdots & \ddots& \vdots \\ X_{p,1} & \cdots & X_{p,p}\end{pmatrix}U_j^*
  \\
  &= \frac{1}{N}\sum_{k=1}^{N}\begin{pmatrix} (\I_{m_1}\otimes U_{k,1})X_{1,1}(\I_{m_1}\otimes U_{k,1})^* & \cdots & (\I_{m_1}\otimes U_{k,1})X_{1,p}(\I_{m_p}\otimes U_{k,p})^*\\ \vdots & \ddots& \vdots \\ (\I_{m_p}\otimes U_{k,p})X_{p,1}(\I_{m_1}\otimes U_{k,1})^* & \cdots & (\I_{m_p}\otimes U_{k,p})X_{p,p}(\I_{m_p}\otimes U_{k,p})\end{pmatrix}\\
  & = \begin{pmatrix}
       (\I_{\M_{m_1,m_{1}}}\otimes \Phi_{1,1})(X_{1,1})
        & \cdots & (\I_{\M_{m_1,m_{p}}}\otimes \Phi_{1,p})(X_{1,p})\\
         \vdots & \ddots & \vdots\\
         (\I_{\M_{m_p,m_{1}}}\otimes \Phi_{p,1})(X_{p,1}) & \cdots & (\I_{\M_{m_p,m_{p}}}\otimes \Phi_{p,p})(X_{p,p})
      \end{pmatrix}\\
  & = \begin{pmatrix}
        (\I_{\M_{m_1}}\otimes \Omega_{n_1})(X_{1,1})
         & \cdots & 0\\
          \vdots & \ddots & \vdots\\
          0 & \cdots & (\I_{\M_{m_p}}\otimes \Omega_{n_p})(X_{p,p})
       \end{pmatrix}\\
 & = \bigl((\I_{\M_{m_1}}\otimes \Omega_{n_1})\oplus\cdots\oplus (\I_{\M_{m_p}}\otimes \Omega_{n_p})\bigr)(X),
\end{align*}
as desired.

\bibliographystyle{alpha}

\bibliography{twirling}

\end{document}